\newtheorem{theorem}{Theorem}
\newtheorem{lemma}{Lemma}
\newtheorem{definition}{Definition}
\begin{document}

\title{Learning linear optical circuits with coherent states}
\author{T.J.\,Volkoff}
\affiliation{Theoretical Division, Los Alamos National Laboratory, Los Alamos, NM, USA.}
\affiliation{Quantum Science Center}
\author{Andrew T. Sornborger}
\affiliation{CCS-3, Los Alamos National Laboratory, Los Alamos, NM, USA.}
\affiliation{Quantum Science Center}

\begin{abstract}
We analyze the energy and training data requirements for supervised learning of an $M$-mode linear optical circuit by minimizing an empirical risk defined solely from the action of the circuit on coherent states. When the linear optical circuit acts non-trivially only on $k<M$ unknown modes (i.e., a linear optical $k$-junta), we provide an energy-efficient, adaptive algorithm that identifies the junta set and learns the circuit. We compare two schemes for allocating a total energy, $E$, to the learning algorithm. In the first scheme, each of the $T$ random training coherent states has energy $E/T$. In the second scheme, a single random $MT$-mode coherent state with energy $E$ is partitioned into $T$ training coherent states. The latter scheme exhibits a polynomial advantage in training data size sufficient for convergence of the empirical risk to the full risk due to concentration of measure on the $(2MT-1)$-sphere. Specifically, generalization bounds for both schemes are proven, which indicate the sufficiency of $O(E^{1/2}M)$ training states ($O(E^{1/3}M^{1/3})$ training states) in the first (second) scheme.
\end{abstract}
\maketitle

\section{Introduction}\label{sec:intro}
 Spatially- or temporally-multiplexed linear optical circuits are the central components of devices that process photon-encoded quantum information and therefore play a vital role in photonic information processing \cite{TAN2019100030,eng}. 
 After the initial generation of nonclassical resource states using optical nonlinearities or heralded entanglement, a set of linear optical layers are responsible for constructing the complexity required for production of photonic or continuous-variable (CV) cluster states, which are the central resources for optical implementations of measurement-based quantum computation. 
 
 As an example of this structure, consider the Borealis device \cite{Madsen2022}, in which 216 pulses (length $\tau$) of single temporal mode CV squeezed states are sequentially routed through time-delay loops of length $\tau$, $6\tau$, and $36\tau$. Three variable beamsplitters, situated where the respective time-delay loops meet the mainline, allow the application of a linear optical circuit, generating long-range entanglement between the temporal modes. Validation that this device indeed produces large scale nonclassical CV states with high modal photon occupation was key for demonstrating  quantum supremacy for Gaussian boson sampling protocols implemented via the coupling of a photon number resolving detector to the Borealis linear optical circuit \cite{Madsen2022}. 
 
 Additionally, recent demonstration of
 scalable, programmable, time-multiplexed, and universal 3-mode linear optical operations \cite{takeda2} provides evidence that linear optical modules can be successfully implemented in near-term photonic quantum neural networks.

 Analogous to the problem of learning quantum $k$-juntas which act nontrivially on a $k$-qubit subregister \cite{yuen2}, in this work, we define the problem of learning linear optical $k$-juntas ($k$-LOJ), i.e., learning the modes on which a linear optical unitary acts nontrivially (the junta) and, potentially simultaneously, learning the action of the unitary on the junta modes. This problem is a specific example of methodology to characterize or certify the dynamical behavior of an optical circuit. Our approach to this problem consists of a hybrid quantum-classical algorithm utilizing coherent state inputs and CV SWAP tests \cite{vs} to compute and minimize an empirical risk function. 
 
 From a broader perspective, because each matrix element of a $k$-LOJ corresponds to a tunable physical parameter, statistical learning of $k$-LOJs as considered in the present work can be considered as a type of CV quantum process tomography, i.e., the task of obtaining classical estimates of CV quantum channels (including CV quantum state tomography) \cite{lvov,Rahimi-Keshari_2011,PhysRevA.92.022101,PhysRevLett.120.090501,obrien,Teo_2021,PhysRevA.102.012616}, for which specific methods have been introduced in the case of characterizing linear optical circuits \cite{PhysRevA.98.052327,rb13,Dhand_2016,laingobrien,spagnolo2,Anis_2012,Fedorov_2015,Katamadze_2021,Tillmann_2016,poot,PhysRevA.101.043809}. These methods have not previously made use of hybrid quantum-classical algorithms.
 
 We emphasize that  learning the parameters of more general CV quantum neural networks \cite{PhysRevResearch.1.033063} with the aim of compiling a target unitary dynamics (i.e., CV quantum compiling \cite{PRXQuantum.2.040327}) is generally less resource intensive than full quantum process tomography of the target unitary. This is because CV quantum compiling does not involve full classical characterization of all matrix elements of a target CV quantum circuit, but only learning an at-most-polynomial number of parameters that specify the ansatz CV quantum neural network. These parameters often take the form of the discrete structure specifications and interaction strengths.
 
 Our use of coherent states as training data is motivated by the fact that linear optical unitaries have a transitive action on the set of isoenergetic coherent states. Further, information obtained from measurements in the overcomplete coherent state or generalized coherent state basis has been useful for CV quantum state tomography \cite{PhysRevA.94.052327}. 
 
Statistical learning methods have been employed for classification of corrupted CV states from their approximate coherent state support (discretized Husimi $q$-function) \cite{PhysRevResearch.3.033278}. However, our empirical risk minimization uses neither entangled states nor CV state tomography subroutines. It relies solely on distinguishing coherent states via the CV SWAP test. Because linear optical unitaries preserve total photon number, the precision of the CV SWAP test depends only on the energy of the training coherent states. We define two empirical risk minimization (ERM) settings, \textbf{ERM1} and \textbf{ERM2}, which differ in the way that the training coherent states are generated. An energetic modification to \textbf{ERM1}, called \textbf{ERM1'}, allows quantitative comparison of the two training state generation methods.  
 
 The respective generalization performances of the ERM settings are found to depend on the number of modes of the linear optical circuit and the energy (in addition to the training set size, as usual). In the ERM settings that we consider, the respective covering number generalization bounds are proven using McDiarmid's inequality (\textbf{ERM1}, \textbf{ERM1'}) and L\'{e}vy's lemma (\textbf{ERM2}). These bounds may find other applications in quantum machine learning utilizing cost functions defined on products of many spheres, or spheres of large dimension. The problem of using nonclassical states of light for learning linear optics in the same hybrid quantum-classical framework (or otherwise) is left to future work.

 \begin{figure*}
    \centering
    \includegraphics[scale=1]{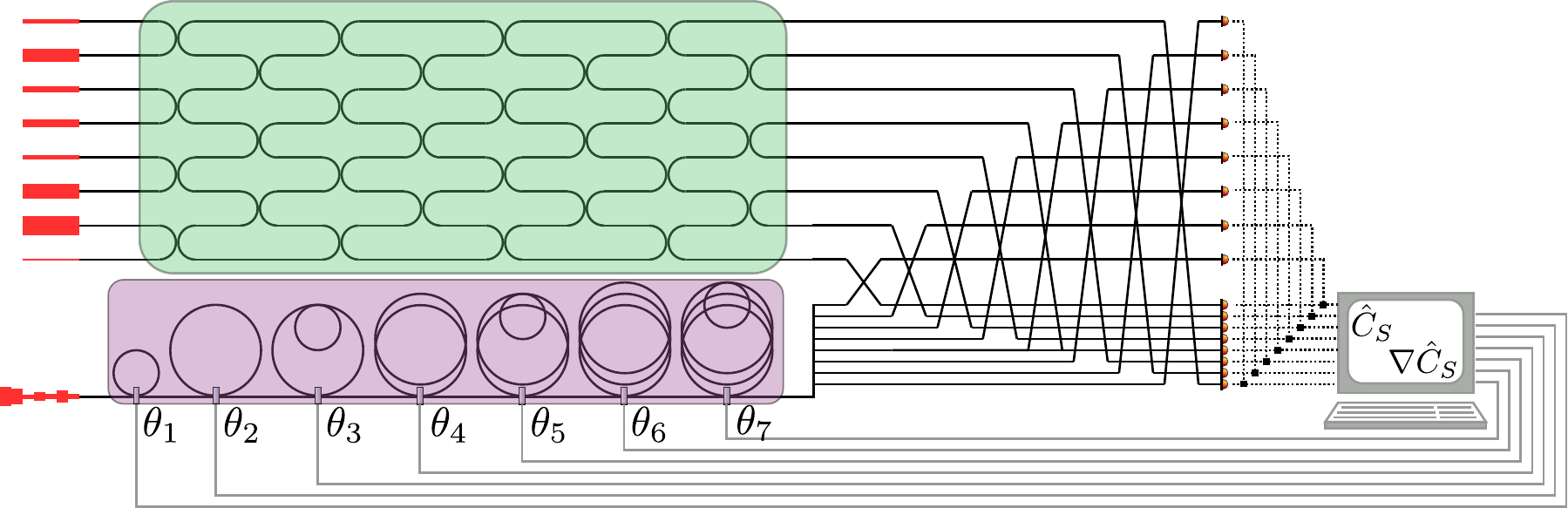}
    \caption{A method for $T=1$ empirical risk function calculation by CV SWAP test and classical optimization algorithm. A random training coherent state is injected into spatial modes of the target linear optical circuit $U$ (green) and temporal modes of the time-delay multiplexed circuit $V(\theta)$ (pink). The rectangular decomposition is used to signify $U$. The $\tau$-loop and $2\tau$-loop time-multiplexed circuit with programmable beamsplitters allows to implement arbitrary linear optical ansatz circuit $V(\theta)$ on $M=8$ temporal modes. The $8\tau$ signal from $V(\theta)$ is spatially demultiplexed and the CV SWAP test (beamsplitter network and photon counting) is performed.}
    \label{fig:schem}
\end{figure*}

We work with a common notational framework for CV systems. The vector of canonical operators (quadratures) is $R=(q_{1},\ldots,q_{M},p_{1},\ldots, p_{M})^{T}$, where the components satisfy the Heisenberg canonical commutation relation $[R_{i},R_{j}]=i\Omega_{i,j}$, and \begin{equation}
    \Omega=\begin{pmatrix}0&I\\-I&0\end{pmatrix}
\end{equation}
is the symplectic form on $\mathbb{R}^{2M}$. The photon number operator is $N_{M}:={1\over 2}R^{T}R$ with unique lowest eigenvector $\ket{0}^{\otimes M}$. A Heisenberg-Weyl coherent state is defined as $\ket{x}=D(x)\ket{0}^{\otimes M}$, where $D(x)=e^{-ix^{T}\Omega R}$, and the parameter $x$ has the physical meaning of the mean vector of the coherent state, viz., $\langle R \rangle_{\ket{x}} = x$.  

There is a one-to-one correspondence between a Gaussian unitary operation $U$ and a symplectic matrix  $T_{U}\in  Sp(2M,\mathbb{R})$ which acts on the quadrature vector operator as
\begin{equation}
    U^{\dagger}RU=T_{U}R.
\end{equation}
The Euler decomposition for $Sp(2M,\mathbb{R})$ implies that a Gaussian circuit can always be written as a composition of squeezing operations followed by a linear optical unitary \cite{Arvind,serafini} and, indeed, this is how NISQ circuits for Gaussian boson sampling are organized \cite{x8}. If $U$ is a linear optical unitary, then $T_{U}\in O(2M)\cap Sp(2M,\mathbb{R})=K(2M)$, the maximal compact subgroup of $Sp(2M,\mathbb{R})$.

\section{Linear optical $k$-juntas}
The maximal compact subgroup $K(2M)$ is seen to be isomorphic to the unitary group $U(M)$ by noting that if $G=\text{Re}G+i\text{Im}G$, then \begin{equation}\begin{pmatrix}
    \text{Re}G&\text{Im}G\\-\text{Im}G & \text{Re}G
\end{pmatrix}\in K(2M)
\end{equation}
and, conversely, that the $(1,1)$ and $(1,2)$ $M\times M$ blocks of any element of $K(2M)$ define the real and imaginary parts of an $M\times M$ unitary. However, due to connectivity and transmissivity restrictions in linear optical circuits,  linear optical unitaries often do not change every element of a coherent state mean vector.
\begin{definition}
An orthogonal matrix in $O\in O(2M)\cap Sp(2M,\mathbb{R})$ is a linear optical $k$-junta ($k$-LOJ) if there exists a subspace $\mathcal{K}\subset \mathbb{R}^{2M}$ of dimension $2M-2k$ and $T \in O(2k)\cap Sp(2k,\mathbb{R})$ such that $O=T\oplus I_{2M-2k}$. \label{def:oo}\end{definition}
This notion is straightforwardly related to the notion of quantum $k$-junta for linear optical unitaries \cite{wang,yuen2}.
\begin{lemma}
If $O$ is a $k$-LOJ and $U^{\dagger}RU=OR$, then there exists a set of $M$ CV modes for which $U$ is a quantum $k$-junta.
\end{lemma}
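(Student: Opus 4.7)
The plan is to exploit the isomorphism $K(2M) \cong U(M)$ recalled at the start of Section II. I will show that the fixed subspace of $O$ in $\mathbb{R}^{2M}$ is $\Omega$-invariant and therefore a complex subspace of $\mathbb{C}^M$ of complex dimension at least $M-k$. A unitary change of basis on $\mathbb{C}^M$ adapted to this subspace block-diagonalizes the unitary $u \in U(M)$ corresponding to $O$, and the corresponding mode change lifts to the desired set of $M$ CV modes in which $U$ is a quantum $k$-junta.

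The key fact is that every $O \in O(2M) \cap Sp(2M,\mathbb{R})$ commutes with $\Omega$: from the symplectic relation $O^{T} \Omega O = \Omega$ and the orthogonality $O^{T} = O^{-1}$, one obtains $\Omega = O \Omega O^{-1}$. Let $F := \ker(O - I)$ denote the fixed subspace of $O$. For any $v \in F$, $O(\Omega v) = \Omega O v = \Omega v$, so $\Omega v \in F$. Thus $F$ is $\Omega$-invariant, i.e., a complex subspace of $\mathbb{R}^{2M} \cong \mathbb{C}^M$ with respect to the complex structure $J = \Omega$. By Definition \ref{def:oo}, $F$ contains the $2(M-k)$-dimensional subspace $\mathcal{K}$, so $\dim_{\mathbb{C}} F \geq M-k$.

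Let $u \in U(M)$ be the unitary corresponding to $O$ under the isomorphism. Both $F$ and its Hermitian-orthogonal complement are $u$-invariant, with $u|_{F} = I$. Pick a complex orthonormal basis of $\mathbb{C}^M$ whose last $\dim_{\mathbb{C}} F$ vectors span $F$, and let $W \in U(M)$ denote the corresponding change of basis. Then $W^{\dagger} u W$ acts as the identity on the last $\dim_{\mathbb{C}} F \geq M-k$ coordinates. Lifting $W$ back to a linear optical unitary via the $K(2M) \leftrightarrow U(M)$ dictionary gives a new mode basis — a set of $M$ CV modes — in which $U$ acts trivially on at least $M-k$ modes, hence is a quantum $k$-junta.

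The only nontrivial step is $\Omega$-invariance of the fixed subspace; the remaining steps are routine linear algebra combined with the $K(2M) \leftrightarrow U(M)$ dictionary. I do not anticipate any obstacle beyond keeping careful track of conventions (quadrature ordering and the choice of complex structure).
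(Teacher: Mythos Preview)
Your proposal is correct and follows the same overall plan as the paper's proof---identify a subspace on which $O$ acts as the identity and rotate it into standard coordinate modes---but you are more explicit about the one nontrivial point. The paper simply asserts an ``isometric isomorphism'' sending $\mathcal{K}$ to $\mathrm{span}\{e_1,\dots,e_{2M-2k}\}$ and then treats the image directions as canonical pairs $(q_i,p_i)$; this tacitly assumes the rotation can be chosen symplectic as well as orthogonal. You prove exactly this: since $O\in K(2M)$ commutes with $\Omega$, the fixed subspace $F=\ker(O-I)\supseteq\mathcal{K}$ is $\Omega$-invariant, hence a complex subspace of $\mathbb{C}^M$, so the basis change may be taken in $U(M)\cong K(2M)$, which is precisely what is needed for the new quadratures to satisfy the CCR. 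Working with the full fixed subspace $F$ rather than $\mathcal{K}$ also makes your argument insensitive to whether Definition~\ref{def:oo} already forces $\mathcal{K}$ itself to be a symplectic subspace.
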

\begin{proof}
There is an isometric isomorphism from $\mathbb{R}^{2M}$ to itself that takes the subspace $\mathcal{K}$ in Definition \ref{def:oo} to the subspace spanned by $\lbrace e_{1},\ldots, e_{2M-2k}\rbrace$. Associate these $2M-2k$ orthogonal directions in $\mathbb{R}^{2M}$ with a vector of canonical operators $\tilde{R}_{\mathcal{K}}=(q_{1},p_{1},\ldots,q_{M-k},p_{M-k})^{T}$, and complete this vector to a vector of canonical operators for an $M$-mode CV system $\tilde{R}= \tilde{R}_{\mathcal{K}^{c}}\oplus \tilde{R}_{\mathcal{K}}$. Then the unitary $U$ associated with $O$ acts as identity on $\tilde{R}_{\mathcal{K}}$, i.e., $U=\mathbb{I}_{\mathcal{K}}\otimes U'$ for unitary $U'$ on $\mathcal{H}^{\otimes k}$.
\end{proof}

In practice, a circuit-based CV quantum information processing protocol is defined with respect to a predefined set of $M$ modes. A linear optical unitary that is a quantum $k$-junta with respect to these modes is then associated with $k$-LOJ in which the subspace $\mathcal{K}$ is defined by a $(M-k)$-subset of the modes. We now fix the set of $M$ CV modes and restrict our consideration to a $k$-LOJ with $\mathcal{K}$ having an orthonormal basis indexed by a subset of $[M]$.

\section{Coherent state ERM}
Before detailing a variational ERM algorithm for learning $k$-LOJ, we discuss features of the ERM problem for general linear optical unitaries. Consider a set of  training coherent states $S=\lbrace \ket{x^{(j)}}\rbrace_{j=1}^{T}$, $x^{(j)}\in \mathbb{R}^{2M}$, and empirical risk function for learning a target linear optical unitary $U$
\begin{align}
    \hat{C}_{S}(V)&={1\over 4T}\sum_{j=1}^{T}\Vert (\mathcal{U}-\mathcal{V})(\ket{x^{(j)}}\bra{x^{(j)}}) \Vert_{1}^{2} 
    \label{eqn:er}
\end{align}
where $V=V(\theta)$ is a linear optical unitary parameterized by $\theta$ and we denote a unitary channel by $\mathcal{U}(\cdot):=U(\cdot)U^{\dagger}$ for any unitary $U$. The empirical risks can be computed by using parity data from photon number counting measurement on corresponding pairs of modes of $U\ket{x}$ and $V\ket{x}$ as in the CV SWAP test, which has known energy cost when coherent state inputs are used \cite{vs}. Minimization of (\ref{eqn:er}) in the case of a single training coherent state is drawn in Fig.~\ref{fig:schem}, in which a target unitary (shown in a generic rectangular decomposition) is learned by updating the variable beamsplitters of a time-delay-multiplexed circuit. It is possible to use commonly implemented circuit architectures for $V(\theta)$, such as triangular decomposition or rectangular decomposition, but in the numerical optimizations in this work, we directly optimize over matrix elements so as to circumvent parametrization-dependent challenges in optimization. Unlike the case for unitary matrices in the computational basis on discrete-variable quantum systems, the matrix elements of $V$ are directly tunable because they are simply the couplings between field quadratures. 

We will consider two ways of enforcing an energy constraint on the training set $S$. In \textbf{ERM1}, $\Vert x^{(j)}\Vert =\sqrt{2E}$ for all $j$, so each training coherent state has energy $E$. In \textbf{ERM2}, one considers $T$ projections of an $MT$-mode coherent state $\ket{x}$ to the respective $M$-mode subsystems using the projections $P_{j}:\mathbb{R}^{2MT}\rightarrow \mathbb{R}^{2M}$ with $P_{j}x=(x_{2M(j-1)+1},\ldots,x_{2Mj})$, $j=1,\ldots, T$. One defines the local $M$-mode coherent states by $\ket{x^{(j)}}=\ket{P_{j}x}$. Taking $\Vert x\Vert = \sqrt{2E}$, the training set $S$ has total energy $E$, regardless of its size. Note that for $T>1$, \textbf{ERM1} has strictly greater total energy than \textbf{ERM2}; later we will introduce a modification to \textbf{ERM1}, called \textbf{ERM1'}, in which the training data set has total energy equal to that of \textbf{ERM2}. These  energy-constrained empirical risk functions correspond to empirical risk minimization tasks with serial and parallel training coherent state sources, respectively.
When variationally learning a linear optical unitary, the empirical risk serves as an estimate of the full risk
\begin{align}
    C_{\textbf{ERM1}}(V)&:= {1\over 4}\int_{S^{2M-1}} \Vert (\mathcal{U}-\mathcal{V})(\ket{x}\bra{x})\Vert_{1}^{2}\nonumber \\
            C_{\textbf{ERM2}}(V)&:= {1\over 4}\int_{S^{2MT-1}}\Vert (\mathcal{U}-\mathcal{V})(\ket{P_{1}x}\bra{P_{1}x})\Vert_{1}^{2} 
    \label{eqn:risk}
\end{align}
where both integrals are over the uniform measure on the respective sphere (of radius $\sqrt{2E}$ for \textbf{ERM1} and \textbf{ERM2}). Note that $C_{\textbf{ERM2}}(V)$ depends on $T$ because the local training coherent state $\ket{P_{j}x}$ in the $j$-th subsystem is obtained from a $MT$-mode coherent state.  The following chain of inequalities shows that each term in the sum (\ref{eqn:er}) has a simple upper bound in terms of the orthogonal matrices associated to $U$ and $V$, and the energy constraint:
\begin{align}
    &{} {1\over 2}\Vert (\mathcal{U}-\mathcal{V})\ket{x}\bra{x}) \Vert_{1}  = \sqrt{1-\vert \langle x\vert U^{\dagger}V\vert x\rangle\vert^{2}} \nonumber \\
    &= \sqrt{1- e^{-{1\over 2}x^{T}(O_{U}-O_{V})^{T}(O_{U}-O_{V})x}}\nonumber \\
    &\le \sqrt{1-e^{-{1\over 2}\Vert x\Vert^{2}\Vert O_{U}-O_{V}\Vert^{2}}}\nonumber \\
    &\le \sqrt{E}\Vert O_{U}-O_{V}\Vert
    \label{eqn:ineq1}
\end{align}
with $\Vert x\Vert=\sqrt{2E}$.
 This fact implies Lipschitz continuity of the empirical risk and the full risk in both settings.
\begin{lemma}
    Let $W$ and $V$ be linear optical unitaries such that $\Vert O_{W}-O_{V}\Vert \le {\epsilon \over  \sqrt{E}}$. Then,
    \begin{align}
        &{} \vert \hat{C}_{S}(W)-\hat{C}_{S}(V) \vert \le \epsilon \text{ with probability } 1\nonumber \\
        &{} \vert C_{\textbf{ERM1}}(W)-C_{\textbf{ERM1}}(V)\vert \le \epsilon.
    \end{align}
    If $\Vert O_{W}-O_{V}\Vert \le \epsilon \sqrt{2MT-1\over E(2M+1)}$, then
        \begin{align}
        &{}\vert C_{\textbf{ERM2}}(W)-C_{\textbf{ERM2}}(V)\vert \le \epsilon.
    \end{align}
    \label{lem:ooo}
    \end{lemma}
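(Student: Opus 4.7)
The plan is to reduce all three Lipschitz estimates to a single pointwise bound on $|f_{W}(x)-f_{V}(x)|$, where $f_{V}(x):=\tfrac{1}{4}\Vert (\mathcal{U}-\mathcal{V})(\ket{x}\bra{x})\Vert_{1}^{2}$ is the common integrand/summand appearing in $\hat{C}_{S}$, $C_{\textbf{ERM1}}$, and $C_{\textbf{ERM2}}$. The three claims then differ only in how this pointwise bound is aggregated over the relevant coherent-state distribution, and the central tool throughout is the chain of inequalities (\ref{eqn:ineq1}) already established in the text.

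First I would write $f_{V}=g_{V}^{2}$ with $g_{V}(x):=\tfrac{1}{2}\Vert (\mathcal{U}-\mathcal{V})(\ket{x}\bra{x})\Vert_{1}\in[0,1]$, and factor the difference as $|f_{W}(x)-f_{V}(x)|=(g_{W}(x)+g_{V}(x))|g_{W}(x)-g_{V}(x)|\leq 2|g_{W}(x)-g_{V}(x)|$. By the reverse triangle inequality for the trace norm,
\begin{equation*}
|g_{W}(x)-g_{V}(x)|\leq \tfrac{1}{2}\Vert (\mathcal{W}-\mathcal{V})(\ket{x}\bra{x})\Vert_{1},
\end{equation*}
and relabeling $(U,V)\mapsto(W,V)$ in (\ref{eqn:ineq1}) bounds this by $\tfrac{\Vert x\Vert}{\sqrt{2}}\Vert O_{W}-O_{V}\Vert$. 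Combining yields the pointwise Lipschitz bound $|f_{W}(x)-f_{V}(x)|\leq \sqrt{2}\,\Vert x\Vert\cdot\Vert O_{W}-O_{V}\Vert$.

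For $\hat{C}_{S}$ and $C_{\textbf{ERM1}}$, every relevant $x$ satisfies $\Vert x\Vert^{2}=2E$ deterministically, so the pointwise bound becomes a fixed $O(\sqrt{E}\,\Vert O_{W}-O_{V}\Vert)$ and passes through the empirical average (with probability $1$) or the spherical integral unchanged, giving the first two inequalities under the hypothesis $\Vert O_{W}-O_{V}\Vert\leq \epsilon/\sqrt{E}$. For $C_{\textbf{ERM2}}$ the argument of $f$ is the projection $P_{1}x$ of a uniform point $x\in S^{2MT-1}_{\sqrt{2E}}$, whose norm is typically smaller than $\sqrt{2E}$; integrating the pointwise bound gives
\begin{equation*}
|C_{\textbf{ERM2}}(W)-C_{\textbf{ERM2}}(V)|\leq \sqrt{2}\,\Vert O_{W}-O_{V}\Vert\int_{S^{2MT-1}}\Vert P_{1}x\Vert\, d\mu(x),
\end{equation*}
and I would control the spherical integral using the fact that $\Vert P_{1}x\Vert^{2}/\Vert x\Vert^{2}$ follows a $\mathrm{Beta}(M,M(T-1))$ law. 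The naive Jensen bound $\int\Vert P_{1}x\Vert\,d\mu\leq\sqrt{2E/T}$ already reproduces the leading $\sqrt{E/T}$ scaling; the precise constant $\sqrt{(2MT-1)/(2M+1)}$ appears to come from an exact Gamma-function moment identity for $E[\sqrt{B}]$ (or equivalently a Poincar\'{e}-type computation on $S^{2MT-1}$ whose spectral-gap factor is $(2MT-1)$), which I would invoke to close the remaining gap.

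The main obstacle is this last refinement: the structural (\ref{eqn:ineq1})-based Lipschitz analysis naturally yields a constant of order $\sqrt{E/T}$, and matching the exact $\sqrt{E(2M+1)/(2MT-1)}$ dependence requires the sharp moment computation for $\Vert P_{1}x\Vert$ on $S^{2MT-1}$ rather than the coarse Jensen bound. The bounds for $\hat{C}_{S}$ and $C_{\textbf{ERM1}}$, by contrast, follow mechanically from the pointwise Lipschitz estimate and the deterministic energy normalization of the training states.
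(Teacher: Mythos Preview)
Your plan matches the paper's proof: reduce everything to a pointwise bound involving $\tfrac{1}{2}\Vert \mathcal{W}(\ket{x}\bra{x})-\mathcal{V}(\ket{x}\bra{x})\Vert_{1}$, apply (\ref{eqn:ineq1}), then average or integrate. Two small points of divergence. First, your factoring $g_{W}^{2}-g_{V}^{2}=(g_{W}+g_{V})(g_{W}-g_{V})$ together with $g_{W}+g_{V}\le 2$ yields $\vert f_{W}(x)-f_{V}(x)\vert \le \sqrt{2}\,\Vert x\Vert\,\Vert O_{W}-O_{V}\Vert$, which under the hypothesis gives $2\epsilon$ rather than $\epsilon$. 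The paper instead rewrites $f_{V}(x)=1-\vert\langle x\vert U^{\dagger}V\vert x\rangle\vert^{2}$ and applies the standard bound $\vert\Tr[\Lambda(\rho-\sigma)]\vert\le \tfrac{1}{2}\Vert\rho-\sigma\Vert_{1}$ for $0\le\Lambda\le I$ with $\Lambda=U\ket{x}\bra{x}U^{\dagger}$, which lands on $\tfrac{1}{2}\Vert\cdot\Vert_{1}$ directly and saves the factor of two. Second, for the \textbf{ERM2} integral $\int_{S^{2MT-1}}\Vert P_{1}x\Vert$, the paper computes the Gamma-function ratio from the marginal (\ref{eqn:marg}) and bounds it by Gautschi's inequality; your Beta-moment identification is exactly this computation, while the Poincar\'{e}/spectral-gap suggestion is off target and not needed here.
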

    \begin{proof}
        We show the first inequality; the proof of the second is the same, replacing the discrete measure over training states with the uniform measure.
        \begin{align}
       &{} \vert \hat{C}_{S}(W)-\hat{C}_{S}(V) \vert  \nonumber \\
       &\le {1\over T}\sum_{j=1}^{T}\Big\vert  \vert \langle x^{(j)}\vert UW^{\dagger}\vert x^{(j)}\rangle\vert^{2}  - \vert \langle x^{(j)}\vert UV^{\dagger}\vert x^{(j)}\rangle\vert^{2}\Big\vert \nonumber \\
       &\le
         {1\over 2}\max_{j}\Vert W\ket{x^{(j)}}\bra{x^{(j)}}W^{\dagger}-V\ket{x^{(j)}}\bra{x^{(j)}}V^{\dagger}\Vert_{1}  \nonumber \\
        &\le \epsilon
    \end{align}
    where we note that for \textbf{ERM2}, $\Vert x^{(j)}\Vert \le \sqrt{2E}$ trivially because $\Vert x\Vert =\sqrt{2E}$.
    The first inequality uses the definition of trace distance for pure states and the triangle inequality, the second inequality is standard (Eq.~(9.96) of \cite{wilde}), the third inequality is (\ref{eqn:ineq1}).

    For \textbf{ERM2} the Lipschitz constant of the cost function is smaller. Note that 
    \begin{align}
        &{} \vert C_{\textbf{ERM2}}(W)-C_{\textbf{ERM2}}(V)\vert \le \nonumber \\
        &{} \int_{S^{2MT-1}} \vert \langle P_{1}x\vert UW^{\dagger}\vert P_{1}x\rangle \vert^{2} -\vert \langle P_{1}x\vert UV^{\dagger}\vert P_{1}x\rangle \vert^{2}\vert \nonumber \\
        &{} \le {1\over \sqrt{2}}\Vert O_{W}-O_{V}\Vert \int_{S^{2MT-1}} \Vert P_{1}x\Vert.
    \end{align}
    Carrying out the integral using the marginal in (\ref{eqn:marg}) and Gautschi's inequality gives that the last line is upper bounded by $\Vert O_{W}-O_{V}\Vert \sqrt{E (2M+1)/(2MT-1)}$.
    \end{proof}

    We note that the integrals defining the full risks (\ref{eqn:risk}) do not have closed form expressions, but can be approximated from series expansions, similar to hypergeometric functions of matrix argument \cite{koev}. For instance, $1-C_{\textbf{ERM1}}(V)$ is given by
    \begin{equation}
      2E^{2M-1}\sum_{i_{1},\ldots,i_{2M}=0}^{\infty}{\prod_{j=1}^{2M}{1\over i_{j}!}\left( -{\kappa_{j}^{2}\over 2}\right)^{i_{j}}\Gamma(i_{j}+{1\over 2})\over  \Gamma(M+\sum_{j=1}^{2M}i_{j})}  .
    \end{equation}
    where $\kappa_{1},\ldots,\kappa_{2M}$ are the singular values of $O_{U}-O_{V}$.

To determine whether numerical minimization of (\ref{eqn:er}) successfully learns $U$, in both Figs.~\ref{fig:f1}a and \ref{fig:f1}b we observe that below the critical value $T=M$, minimization of (\ref{eqn:er}) fails to return a $W\in K(2M)$ that approximates the target $O_{U}$ closely in Frobenius norm.
\begin{figure*}
    \centering
    \includegraphics[scale=0.545]{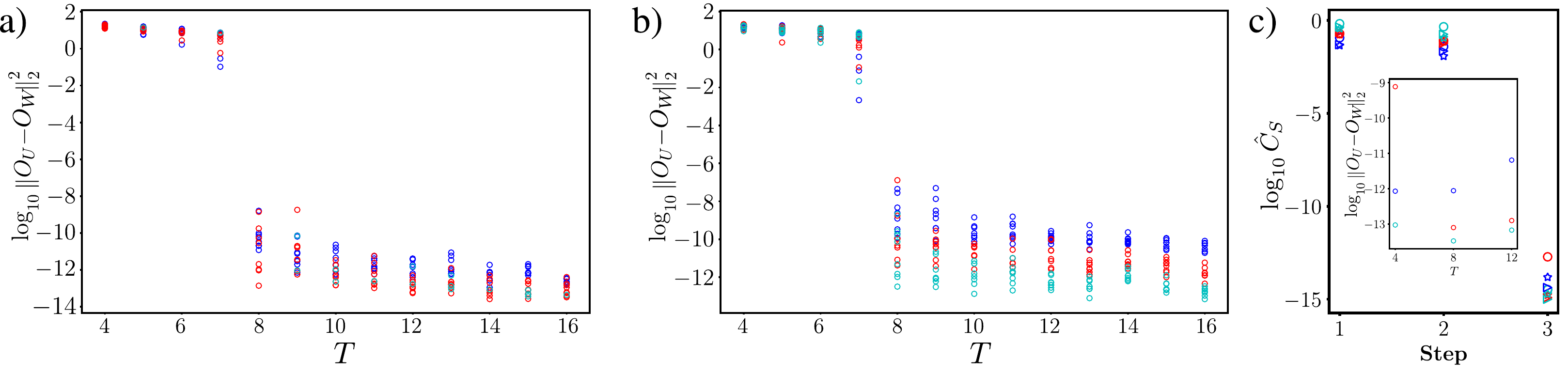}
    \caption{Empirical risk minimization for a uniform random $8$-mode target linear optical unitary, $U$, with 8 uniformly random coherent state training datasets of size $T\in \lbrace 4,\ldots,16\rbrace$. If a training dataset did not result in finding a $W$ such that $\hat{C}_{S}(W)<10^{-7}$, it was not shown.  Quality of the minimization output is quantified by $\Vert O_{U}-O_{W}\Vert_{2}^{2}$. a) (\textbf{ERM1}) Blue, $E=1$; Red, $E=4$; Cyan $E=16$. \textbf{ERM1} failed to obtain a minimum for $T < 8$. This is expected since the empirical risk is not faithful for a smaller training dataset. b) (\textbf{ERM2}) Blue, $E=1$; Red, $E=4$; Cyan $E=16$. The constraint $\Vert V^{\dagger}V-\mathbb{I}_{M}\Vert_{2}^{2}\le 10^{-6}$ was enforced on the variational matrix $V$ during this simulated ERM to ensure approximate unitarity. Note that many more cyan ($E = 16$) datapoints are found, indicating that \textbf{ERM2} has a more favorable landscape for optimization relative to \textbf{ERM1}. c) Log$_{10}$ of the \textbf{ERM2} empirical risk minimal value for learning a single, random 4-LOJ on $M=8$ modes using Algorithm \ref{alg:lo} ($J=\lbrace 3,4,5,8\rbrace$ for this target unitary). Circles are T=4, triangles are $T=8$, stars are $T=12$. Note that $T\ge 4$ is necessary for faithful empirical risk for a target $4$-LOJ. The junta size determines the number of steps that the algorithm takes; here, the junta size is $4$ and therefore $3$ steps of the algorithm suffices. Inset shows distance of output $O_{W}$ from $O_{U}$ after third step for various $T$. We see that the distance of the learned matrix to the target matrix is smaller for higher energy.}
    \label{fig:f1}
\end{figure*}
At $T= M$, we observe a transition to successful learning of $U$ (if success is characterized by an appropriate value of the Hilbert-Schmidt distance). This transition is explained by the fact that a $2M\times 2M$ symplectic orthogonal matrix is a function of two $M\times M$ block submatrices, so $M$ linearly independent training coherent states are sufficient to make $\text{C}_{\textbf{ERM1(2)}}$ a faithful cost function (almost everywhere with respect to the measure governing the training set $S$), i.e., the unique global minimum occurs for $V=U$. 

However, even when the empirical risk is almost always faithful (which is the case for $T\ge M$), the performance of the minimization algorithm depends on how the energy scales with the number of modes, due to the energy-dependent barren plateau phenomenon \cite{Volkoff2021}. Specifically, when the energy of a multimode coherent state scales at least linearly with the number of modes, the expected magnitude of the gradient of the empirical risk exponentially decreases to zero with respect to increasing $M$, rendering gradient-descent-based minimization algorithms ineffective. This energy-dependent barren plateau phenomenon is the cause of the reduced effectiveness of the learning algorithm \textbf{ERM1} for large energies. For example, in Fig. \ref{fig:f1}a, the minimization is successful for $E=4$, but the minimization is rarely successful for $E=16$.  

To see how this problem is lessened for $\textbf{ERM2}$, we appeal to the marginal probability density of each of the $T$ $M$-mode coherent states comprising a training set in \textbf{ERM2}. Without loss of generality \cite{muirhead}, 
\begin{equation}
    p(x^{(1)})\propto {\theta(2E - \Vert x^{(1)}\Vert^{2})\over (2E)^{MT-1}}(2E-\Vert x^{(1)}\Vert^{2})^{M(T-1)-1}.
    \label{eqn:marg}
\end{equation}
From this marginal density, one finds that the expected energy of each of the individual training coherent states $\ket{x^{(i)}}$ in \textbf{ERM2} is $E/T$. If the energy $E$ is chosen to scale linearly with $M$, then increasing $T$ reduces the impact of the exponential factor that causes the energy-dependent barren plateau phenomenon, which eases optimization \cite{Volkoff2021}. In fact, the data in Fig.~\ref{fig:f1}b is closely replicated by the scheme of \textbf{ERM1} modified by requiring that each training coherent state have energy $E/T$, i.e., $\Vert x^{(i)}\Vert=\sqrt{2E/T}$. We will call this modified scheme \textbf{ERM1'}. However, in the next section, we see that concentration of $C_{\textbf{ERM2}}$ on the sphere embedded in $\mathbb{R}^{2MT}$ causes an increased rate of convergence to the associated risk function (with respect to $T$) compared to $C_{\textbf{ERM1'}}$.

Before carrying out an analysis of generalization bounds for the empirical risk minimization settings, we first introduce an energy-efficient method for learning a $k$-LOJ. Specifically, the target orthogonal matrix is promised to be a $k$-LOJ, but the junta set $J\subset [M]$, $\vert J\vert=k$ is unknown. The energy-efficient adaptive method in Algorithm \ref{alg:lo} is based on the fact that if the subset $J$ were perfectly known, then it would be wasteful to use training coherent states that are non-vacuum on the $J^{c}$ modes. The eventual success of Algorithm 1 is predicated on the success of the empirical risk minimization when $T$ is at least the size of the junta set, which can be demonstrated utilizing an analysis similar to that discussed above for Figs.~\ref{fig:f1}a, b.

Algorithm \ref{alg:lo} finds the junta set $J$ and learns the target $k$-LOJ using $O((M+k)^{2})$ minimizations of a faithful empirical risk, e.g., $\hat{C}_{S}$ defined by a training set of size at least $k$. 
\begin{figure}
\begin{algorithm}[H]
  \caption{$k$-LOJ learning with adaptive linear optical ansatze}
  \label{EPSA}
   \begin{algorithmic}[1]
   \STATE Given: $k$-LOJ $O_{U}$ with unknown $k$ and junta set $J$
   \STATE $J=\emptyset$
   \STATE $\mathcal{J}_{2}=\lbrace \lbrace i,j\rbrace\subset [M]:i<j\rbrace$
    \STATE Ansatz $V^{\lbrace i,j\rbrace}(\alpha)$ acts nontrivially on modes $\lbrace i,j\rbrace$. Compute $\min_{\alpha}\hat{C}_{S}(V^{\lbrace i,j\rbrace}(\alpha))$ for all $\lbrace i,j\rbrace\in \mathcal{J}_{2}$ and return the minimal value $c_{2}$, which occurs for mode pairs $\lbrace i^{(1)},j^{(1)}\rbrace$, $\lbrace i^{(2)},j^{(2)}\rbrace$, $\ldots$ 
    \STATE $J\leftarrow \cup_{\ell}\lbrace i^{(\ell)},j^{(\ell)}\rbrace$
    \IF{$c_{2}<10^{-10}$}
    \STATE Return $J$ and $\bigotimes_{\ell}V^{\lbrace i^{(\ell)},j^{(\ell)}\rbrace}(\alpha^{\lbrace i^{(\ell)},j^{(\ell)}\rbrace}_{c})$
    \ELSIF{$c_{2}\ge 10^{-10}$}
    \STATE $\mathcal{J}_{3}=\lbrace J\cup \lbrace l\rbrace\subset [M]:l\notin J\rbrace$
    \STATE Ansatz $V^{J\cup \lbrace l\rbrace}(\beta)$ acts nontrivially on modes $J\cup \lbrace l\rbrace$. Compute $\min_{\beta}\hat{C}_{S}(V^{J\cup \lbrace l\rbrace}(\beta))$ for all $J\cup \lbrace l\rbrace\in\mathcal{J}_{3}$ and return the minimal value $c_{3}$, which occurs for mode sets $J\cup \lbrace l^{(1)}\rbrace$, $J\cup \lbrace l^{(2)}\rbrace$\, $\ldots$ 
    \ENDIF
    \STATE $J\leftarrow J \cup_{\ell}\lbrace i^{(\ell)}\rbrace$
        \IF{$c_{3}<10^{-10}$}
    \STATE Return $J$ and $\bigotimes_{\ell}V^{J\cup \lbrace l^{(\ell)}\rbrace }(\beta^{J\cup \lbrace l^{(\ell)}\rbrace}_{c})$
    \ELSIF{$c_{3}\ge 10^{-10}$}
    \STATE $\mathcal{J}_{4}=\lbrace J\cup \lbrace l\rbrace\subset [M]:l\notin J\rbrace$
    \STATE Ansatz $V^{J\cup \lbrace l\rbrace}(\gamma)$ acts nontrivially on modes $J\cup \lbrace l\rbrace$. Compute $\min_{\gamma}\hat{C}_{S}(V^{J\cup \lbrace l\rbrace}(\gamma))$ for all $J\cup \lbrace l\rbrace\in\mathcal{J}_{4}$ and return the minimal value $c_{4}$, which occurs for mode sets $J\cup \lbrace l^{(1)}\rbrace$, $J\cup \lbrace l^{(2)}\rbrace$, $\ldots$
    \ENDIF
    \STATE $J\leftarrow J \cup_{\ell}\lbrace i^{(\ell)}\rbrace$
    \STATE \textbf{if} $c_{4}<10^{-10}$
    \STATE \hspace{0.4cm} Return $J$ and $\bigotimes_{\ell}V^{J\cup \lbrace l^{(\ell)}\rbrace }(\gamma^{J\cup \lbrace l^{(\ell)}\rbrace}_{c})$
    \STATE Etc.
   \end{algorithmic}
   \label{alg:lo}
\end{algorithm}
\end{figure}
In the pseudocode for Algorithm \ref{alg:lo}, we have suppressed the subroutine generating the training states for empirical risk minimization with the $2$-LOJ ansatze defined by members of $\mathcal{J}_{2}$, the $3$-LOJ ansatze defined by members of $\mathcal{J}_{3}$, etc. To define this subroutine, first note that at the iteration defined by $\mathcal{J}_{k}$, one should use training sets of size $T\ge k$ so that the empirical risk function is faithful for learning $k$-LOJ and, therefore, so that the algorithm can in principle terminate. Then, to utilize the $k$ training states in an energy-efficient way within \textbf{ERM2}, note that identifying a minimal value $c_{2}$ requires an energy at least $E_{2}$ for each of the trials in $\mathcal{J}_{2}$, which depends on the specific implementation (one can see this by noting that, e.g., $\hat{C}_{S}\rightarrow 1$ for $\Vert x\Vert \rightarrow 0$). Similarly, in a general iteration $j$, identifying the minimal value $c_{j}$ requires a minimal total energy $E_{j}$. Within the scheme \textbf{ERM2}, this procedure consumes energy at most ${M\choose 2}E_{2} + (M-2)E_{3}+\ldots + (M-k+1)E_{k}$. Numerical data obtained by applying Algorithm \ref{alg:lo} to a random  $M=8$ mode $4$-LOJ is shown in Fig. \ref{fig:f1}c).

Note that there are many non-generic instances of $U$ (but commonly encountered in actual devices)  where this algorithm would terminate in the first junta set update. For instance, consider $U$ being a tensor product of generalized beamsplitters on modes $(1,2)$, $(3,4)$, $(M,M-1)$. After the junta set update, one would conclude that not only is the junta set $J=[M]$, but the $U$ is given by $\bigotimes_{j=1}^{M-1}V^{(j,j+1)}(\alpha^{(i,j)}_{c})$ to high accuracy. In Algorithm \ref{alg:lo}, we start with pairs of modes because one can consider generalized beamsplitters as a fundamental linear optical element, which subsumes the case of a tensor product of single-mode phase shifts. Strictly speaking, the numerical precision of the optimization algorithm will allow to identify a mode set $J\cup \lbrace l\rbrace$ achieving the minimal value $c_{2}$, $c_{3}$, etc., so there should be a small finite tolerance on what value is considered ``minimal'' so as not to do more ERM iterations than needed. It should be noted that the junta set updates in Algorithm \ref{alg:lo} do not make use of quantum coherence. Introducing a discrete variable quantum register which controls the junta set updating procedure and carrying out, e.g., Grover's algorithm for unstructured search, on that register could result in learning $U$ after sublinear (in $M$) ERM  runs.

Finally, we note that there are alternative algorithms that allow to identify the junta set $J$, but without learning the target $k$-LOJ. In Ref.~\cite{yuen2}, the junta set of a multiqubit unitary is determined with high probability by sampling its Pauli spectrum using the Choi state of the unitary. Since CV systems do not have an analog of the Pauli spectrum, a different quantum algorithm is required. In the case of finding the junta set for $k$-LOJ, the fact that linear optical unitaries preserve the set of coherent states allows a semiclassical algorithm utilizing only pure coherent states. We assume preparation of the $4M^{2}$-mode coherent state
\begin{equation}
    \left( U\ket{x}_{A}\otimes \ket{x}_{B}\right)^{\otimes M}\otimes \left( U\ket{y}_{C}\otimes \ket{y}_{D}\right)^{\otimes M}
\end{equation}
where $A=(A_{1},\ldots,A_{M})$ is an $M$-mode CV system (similarly for $B$, $C$, $D$), and $P_{j}x\neq \lambda P_{j}y$ for any $\lambda\in \mathbb{R}$, where $P_{j}$ is the projector to the $j$-th component of a vector in $\mathbb{R}^{2M}$. The above state is equal to
\begin{align}
    &{} \bigotimes_{j=1}^{M}\left[ \ket{P_{j}O_{U}x}\bra{P_{j}O_{U}x}_{A_{j}} \otimes \ket{P_{j}x}\bra{P_{j}x}_{B_{j}} \right. \nonumber \\
    &{}\left. \otimes \ket{P_{j}O_{U}y}\bra{P_{j}O_{U}y}_{C_{j}} \otimes \ket{P_{j}y}\bra{P_{j}y}_{D_{j}}\right].\label{eqn:fgfg}
\end{align}
For each $j\subset [M]$, the CV SWAP test is carried out on (\ref{eqn:fgfg}) between modes $A_{j}$ and $B_{j}$ and between modes $C_{j}$ and $D_{j}$ \cite{vs}. If both succeed, the mode $j$ is appended to the complement of the junta set. This procedure is repeated according to the energy-dependent complexity results of \cite{vs}, producing the junta set to a desired accuracy. 

\section{Generalization bounds}

In quantum machine learning with few training data, and, specifically in the task of learning quantum dynamics, the empirical risk often does not define a faithful cost function. Generalization bounds allow to bound the probability that the value of the empirical risk differs by some small amount from the value of the full risk. Obtaining the minimal value of the full risk on a hypothesis unitary guarantees that the dynamics has been learned exactly.
When proving generalization bounds for full risk functions that are defined via expectation values of observables that are only constrained roughly (e.g., by their operator norm), the concentration inequality used to derive the generalization bound must be correspondingly general. For instance, in \cite{Caro2022} the only quantity constraining the empirical risk is the norm of the observable defining the loss function, so the relevant concentration inequalities are for independent, identically distributed random variables in an interval defined by that norm. Since the empirical risk functions in this paper are functions of high-dimensional sphere-valued random variables with the sphere radius determined by the energy and risk minimization setting, concentration on the sphere is the relevant concept for determining the convergence rate of the empirical risk to the full risk.

Let $X$ be a random variable taking values on the $D$-sphere in $\mathbb{R}^{D+1}$ with radius $R$, and let the distribution of $X$ have uniform density on the sphere. Consider the Gaussian function
\begin{equation}
f(X)=e^{-{1\over 2}X^{T}L(U,V)X}
\label{eqn:lip}
\end{equation}
where $L(U,V):=(O_{U}-O_{V})^{T}(O_{U}-O_{V})$ is a positive, symmetric matrix. The restriction of the function (\ref{eqn:lip}) to the $D$-sphere of radius $R$ is $\kappa$-Lipschitz where \begin{equation}\kappa\le \Vert L(U,V)\Vert Re^{-\lambda_{\text{min}}(L(U,V)) E} \le R\Vert L(U,V)\Vert .
\end{equation}
The concentration of a function, $f$, of a sphere-valued random variable about its expectation, $E(f)$, is related to the Lipschitz constant, $\kappa$, through  L\'{e}vy's lemma \cite{Hayden2006} 
\begin{equation}
P(\vert f(X)-E(f(X))\vert \ge \eta) \le 2e^{-C_{1}(D+1)\eta^{2}\over \kappa^{2}}.
\label{eqn:levy}
\end{equation}
The generalization bounds for learning a linear optical unitary that appear in Theorem \ref{thm:levthm} indicate a qualitative difference in the size of the coherent state training set required for convergence of \textbf{ERM1} and \textbf{ERM2} learning protocols.

\begin{theorem}
(Generalization bounds for learning  $O_{U}$) Let $V(\theta_{S})$ be the unitary minimizing (\ref{eqn:er}) for each training set $S$. Then, with $C_{1}=(9\pi^{3}\ln 2)^{-1}$, the following bound holds with probability at least $1-\delta$ over training sets $S$ of energy $E$ and size $T$
\begin{align}
    &{} \vert C_{\textbf{ERM2}}(V(\theta_{S}))-\hat{C}_{S}(V(\theta_{S}) )\vert \nonumber \\
    &< \sqrt{ {16EM \log 6\sqrt{C_{1}MT^{3}}  \over C_{1}T^{3}} + {16E\log{2\over \delta} \over C_{1}MT^{3}}} \nonumber \\
    &+ 2\sqrt{E\over C_{1}MT^{3}}.
    \label{eqn:gb1}
\end{align}
if $S$ satisfies the energy constraint in \textbf{ERM2}, and
\begin{align}
    &{} \vert C_{\textbf{ERM1}}(V(\theta_{S}))-\hat{C}_{S}(V(\theta_{S}) )\vert \nonumber \\
    &< \sqrt{ {32EM^{2} \log 6\sqrt{T}  \over T} + {32E\log{2\over \delta} \over T}} \nonumber \\
    &+ 2\sqrt{E\over T}
    \label{eqn:gb2}
\end{align}
if $S$ satisfies the energy constraint in \textbf{ERM1}.
\label{thm:levthm}
\end{theorem}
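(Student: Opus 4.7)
The plan is a concentration-plus-covering argument on the hypothesis class $K(2M) \cong U(M)$. Write each summand of the empirical risk as $g_V(y) := 1 - e^{-y^T L(U,V) y / 2}$, with $L(U,V) := (O_U - O_V)^T(O_U - O_V)$ satisfying $\Vert L(U,V)\Vert \le 4$ uniformly over $V$. The two settings differ in how the training coherent states are sampled, and this forces the use of different concentration inequalities: L\'{e}vy's lemma for \textbf{ERM2}, since the entire training set is a single $S^{2MT-1}$-valued random vector of radius $\sqrt{2E}$; and a bounded-differences (McDiarmid) inequality for \textbf{ERM1}, since its $T$ training states are iid on $S^{2M-1}$ of radius $\sqrt{2E}$.

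For \textbf{ERM2}, the first step is to compute the Lipschitz constant of $\hat{C}_S(V)$ in the variable $x$ on $S^{2MT-1}$. A direct gradient calculation, using the fact that the coordinate projectors $\{P_j\}_{j=1}^T$ have mutually orthogonal ranges and hence $\sum_j \Vert P_j x\Vert^2 = \Vert x\Vert^2$, yields $\kappa^2 \le 2E\Vert L(U,V)\Vert^2 / T^2 \le 32E/T^2$. Plugging this into (\ref{eqn:levy}) with $D+1 = 2MT$ produces the pointwise deviation bound $P(|\hat{C}_S(V) - C_{\textbf{ERM2}}(V)| \ge \eta) \le 2\exp\!\bigl(-C_1 M T^3 \eta^2 / (16E)\bigr)$. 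For \textbf{ERM1}, a bounded-differences argument applied to the iid average, using the range estimate $0 \le g_V(y) \le \tfrac{1}{2}y^T L(U,V) y \le 4E$ (itself a consequence of the chain (\ref{eqn:ineq1}) and $1 - e^{-z} \le z$), produces a pointwise bound of the same shape, namely $P(|\hat{C}_S(V) - C_{\textbf{ERM1}}(V)| \ge \eta) \le 2\exp(-T\eta^2 / (cE))$ with a universal constant $c$ that, after combining with the Lipschitz extension below, yields the $32E$ coefficient in (\ref{eqn:gb2}).

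Next, I will cover $K(2M)$ -- which has real dimension $M^2$ -- by a minimal $\epsilon$-net $N_\epsilon$ in the operator norm with the standard volume bound $\log|N_\epsilon| \le M^2 \log(6/\epsilon)$. A union bound promotes the pointwise concentration to uniform control over $N_\epsilon$ with probability $\ge 1-\delta$, the tolerance $\eta$ now depending on $\log|N_\epsilon| + \log(2/\delta)$. For an arbitrary $V$, and in particular the empirical minimizer $V(\theta_S)$, I will select a nearest net point $V' \in N_\epsilon$ with $\Vert O_V - O_{V'}\Vert \le \epsilon$ and invoke Lemma~\ref{lem:ooo} twice to bound $|\hat{C}_S(V) - \hat{C}_S(V')| + |C(V) - C(V')| \le 2\sqrt{E}\epsilon$ by the triangle inequality. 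Finally, setting $\epsilon = 1/\sqrt{C_1 M T^3}$ in \textbf{ERM2} (respectively $\epsilon = 1/\sqrt{T}$ in \textbf{ERM1}) balances the covering logarithm against the discretization term, turning $\log(6/\epsilon)$ into $\log(6\sqrt{C_1 M T^3})$ (respectively $\log(6\sqrt{T})$) and producing exactly the additive terms $2\sqrt{E/(C_1 M T^3)}$ and $2\sqrt{E/T}$ that appear in (\ref{eqn:gb1})--(\ref{eqn:gb2}).

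The main technical obstacle is the Lipschitz estimate for \textbf{ERM2}: the naive tensorization $\kappa \propto 1/\sqrt{T}$ would only yield a $T^2$ rate (not $T^3$) and would erase the polynomial advantage advertised in the abstract. Recovering the correct $1/T$ scaling relies on the exact identity $\sum_j \Vert P_j x\Vert^2 = \Vert x\Vert^2$, together with the block-orthogonality of the operators $P_j^T L(U,V) P_j$ on disjoint coordinate subsets of $\mathbb{R}^{2MT}$, which turns what would be a Cauchy--Schwarz bound into an equality. The remaining work -- choosing the worst-case $\Vert L(U,V)\Vert \le 4$ uniformly over the hypothesis class, and verifying that Lemma~\ref{lem:ooo} supplies Lipschitz constants of the correct order on both sides of the covering -- is essentially bookkeeping.
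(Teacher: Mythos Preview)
Your proposal is correct and follows essentially the same route as the paper: L\'{e}vy's lemma on $S^{2MT-1}$ for \textbf{ERM2} (with the crucial $1/T$ Lipschitz constant coming from the block-orthogonality of the $P_j$, exactly as you isolate), McDiarmid's bounded-differences inequality for \textbf{ERM1}, a union bound over an $\epsilon$-net of $K(2M)$ of cardinality at most $(6/\epsilon)^{M^2}$, and Lemma~\ref{lem:ooo} to transfer the uniform bound from net points to the empirical minimizer $V(\theta_S)$. The only cosmetic differences are that the paper parametrizes the net at scale $\epsilon/\sqrt{E}$ (so the extension cost reads $2\epsilon$ rather than your $2\sqrt{E}\epsilon$) and, for \textbf{ERM1}, bounds the McDiarmid differences through the gradient estimate (\ref{eqn:liperm1}) rather than through the range of each summand $g_V$.
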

\begin{proof}
    We first prove (\ref{eqn:gb1}) because it makes use of L\'{e}vy's lemma. Let $\mathcal{N}_{{\epsilon/ \sqrt{E}}}=\lbrace g_{j}\rbrace_{j}$ be an $\epsilon/\sqrt{E}$ cover of $O(2M)$. Consider a fixed training set $S$. From Lemma \ref{lem:ooo}, it follows that there is a $g\in \mathcal{N}_{\epsilon/\sqrt{E}}$ such that (with probability 1 over $S$)
    \begin{align}
        \big\vert \vert \hat{C}_{S}(g)-C(g)\vert - \vert \hat{C}_{S}(V(\theta_{S}))-C(V(\theta_{S}))\vert \big\vert \le 2\epsilon.
        \label{eqn:bbg}
    \end{align}
    From this, it follows that
    \begin{align}
        &{}P_{S}\left[  \vert \hat{C}_{S}(V(\theta_{S}))-C(V(\theta_{S}))\vert \ge \eta + 2\epsilon \right] \nonumber \\
        &\le P_{S}\left[  \exists g \in \mathcal{N}_{\epsilon/\sqrt{E}}: \, \vert \hat{C}_{S}(g)-C(g)\vert \ge \eta  \right]
        \label{eqn:ppi}
    \end{align}
    because if the event on the left hand side holds and $\vert \hat{C}_{S}(g)-C(g)\vert< \eta$ for all $g\in \mathcal{N}_{\epsilon/\sqrt{E}}$, then (\ref{eqn:bbg}) implies that $V(\theta_{S})$ cannot be in an $\epsilon/\sqrt{E}$ neighborhood of any $g$, which is a contradiction. For \textbf{ERM2}, the Lipschitz constant in Levy's lemma (\ref{eqn:levy}) is upper bounded using
    \begin{equation}
        \max_{\substack{ x\in \mathbb{R}^{2MT}\\ \Vert x \Vert=\sqrt{2E}}}\Vert \nabla \hat{C}_{S}(g)\Vert \le {\sqrt{2E}\over T}\Vert L(U,g)\Vert \le {4\sqrt{2E}\over T},
    \end{equation}
    where $L(U,g)$ is defined below (\ref{eqn:lip}).
    Applying the union bound gives that the right hand side of (\ref{eqn:ppi}) is  
    \begin{align}
        &\le 2\vert \mathcal{N}_{{\epsilon\over \sqrt{E}}}\vert e^{-C_{1}MT^{3}\eta^{2}\over 16E}\nonumber \\
        &\le 2\left( {6\sqrt{E}\over \epsilon}\right)^{M^{2}}e^{-C_{1}MT^{3}\eta^{2}\over 16E}.
    \end{align}
    Taking
    \begin{equation}
        \eta =\sqrt{{16EM\log {6\sqrt{E}\over \epsilon} \over C_{1}T} + {16E \log{2\over \delta}\over C_{1}MT}}
    \end{equation}
    gives that the right hand side of (\ref{eqn:ppi}) is upper bounded by $\delta$. Taking $\epsilon = \sqrt{E\over C_{1}MT}$ so that $\eta$ and $2\epsilon$ have the same scaling with $T$, it follows that the event (\ref{eqn:gb1}) holds with probability greater than $1-\delta$.

    To prove (\ref{eqn:gb2}) one cannot use L\'{e}vy's lemma because $\hat{C}_{S}$ is defined on a product of spheres $(S^{2M-1})^{\times T}$ instead of a single sphere. However, McDiarmid's inequality \cite{vershynin} is suitable for analyzing concentration of functions of i.i.d. random variables when an upper bound on the Lipschitz constant of the function is available. Under the assumptions of \textbf{ERM1}, one notes that
    \begin{equation}
        \max_{\substack{ (x^{(1)},\ldots,x^{(T)})\\ \Vert x^{(1)} \Vert =\cdots=\Vert x^{(T)} \Vert=\sqrt{2E}} }\Vert \nabla \hat{C}_{S}(g)\Vert \le 4\sqrt{2E}.
        \label{eqn:liperm1}
    \end{equation}
    Therefore, due to fact that $\hat{C}_{S}$ is a symmetric function on $(S^{2M-1}(\sqrt{2E}))^{\times T}$, the difference between $\hat{C}_{S}$ evaluated at $(x^{(1)},\ldots,x^{(j)},\ldots,x^{(T)})$ and $\hat{C}_{S}$ evaluated at $(v^{(1)},\ldots,y^{(j)},\ldots,x^{(T)})$ for any $j\in [T]$ is upper bounded by
    \begin{equation}
        {\Vert L(U,g)\Vert \sqrt{2E}\over T}\Vert x^{(j)}-y^{(j)}\Vert \le {4E\Vert L(U,g)\Vert \over T}.
    \end{equation}
    Applying McDiarmid's inequality and the union bound gives that the right hand side of (\ref{eqn:ppi})  is now upper bounded by
    \begin{equation}
        2\left({6\sqrt{E}\over \epsilon}\right)^{M^{2}}e^{-{\eta^{2}T\over 32 E}}.
    \end{equation}
   Taking 
   \begin{equation}
       \eta^{2}={32EM^{2}\log {6\sqrt{E}\over \epsilon}\over T} + {32E\log{2\over \delta}\over T}
   \end{equation}
   to get the $1-\delta$ guarantee, and taking $\epsilon={\sqrt{E\over T}}$ gives the stated generalization bound.
    
\end{proof}
To prove a generalization bound in the modified scheme \textbf{ERM1'}, which has the same total energy of the training set $S$ as \textbf{ERM2}, one should follow the proof of (\ref{eqn:gb2}) using McDiarmid's inequality, but the right hand side of (\ref{eqn:liperm1}) should be replaced by $4\sqrt{2E\over T}$ owing to the $\Vert x^{(i)}\Vert=\sqrt{2E/T}$ constraint. One obtains that $T=O(M\sqrt{E})$ training states are sufficient for generalization. 

\section{Discussion}
The importance of linear optical circuits for photonic quantum information processing is indicated by the fact that techniques for automated quantum experiment design use linear optical circuits as primary examples and testbeds \cite{zeil,PhysRevX.11.031044,PhysRevLett.124.010501}. In the present work, empirical risks for learning linear optical $k$-juntas are defined and calculated using coherent state training datasets with a fixed distribution of energy. Analysis of empirical risks defined by general energy-constrained CV Gaussian training states would not only allow optically non-classical training states, but could also extend the learning setting to include an entangled memory register \cite{PhysRevLett.128.070501,PRXQuantum.2.040327}. Improvements to sample complexity of learning linear optics are expected in this setting, following physically from the optimal sensitivity of non-classical CV Gaussian states to optical phase shifts under a mean energy constraint \cite{PhysRevA.90.025802}.

In terms of practical implementation, we consider \textbf{ERM2} to be well-suited to the scenario when the target linear optical circuit $U$ is accessible in parallel by distributing $T$ $M$-mode probe states. In constrast, \textbf{ERM1} or \textbf{ERM1'} seem more relevant if access to the unitary is scarce, e.g., in time. Regardless of the setting, we envision the training coherent states being generated by a circuit of random generalized beamsplitters applied to parent beam pulses. Lastly, we note that Algorithm 1 can be made both more energy and sample-efficient if a restriction to a specific hypothesis class of ansatze $V(\theta)$ is made (even probabilistically, as would be relevant if a sharper prior on the target linear optical unitary is known). In these cases, it is also possible that more refined learning algorithms for linear optical circuits \cite{Kuzmin:21,Spagnolo2017,Flamini2017} could be applied in lieu of Algorithm \ref{alg:lo}.

\acknowledgements
The authors thank Lukasz Cincio, Y. Suba\c{s}{\i}, and Z. Holmes for useful discussions. This work was supported by the Quantum Science Center (QSC), a National Quantum Information Science Research Center of the U.S. Department of Energy (DOE). Los Alamos National Laboratory is managed by Triad National Security, LLC, for the National Nuclear Security Administration of the U.S. Department of Energy under Contract No. 89233218CNA000001.
\bibliography{bibl.bib}
\bibliographystyle{plainnat}


\end{document}